\documentclass[runningheads]{llncs}

\usepackage[T1]{fontenc}

\usepackage{graphicx}

\usepackage{xspace}
\usepackage{csquotes}
\usepackage{amsmath}
\usepackage{amssymb}
\usepackage{hyperref}
\usepackage{cleveref}
\usepackage{subcaption}

\usepackage{tikz}
\usetikzlibrary{arrows,shapes}
\usetikzlibrary{positioning}

\usepackage{enumitem}

\usepackage{fullmacros}
\renewcommand{\initial}[1]{\mathbf{i}(#1)}
\renewcommand{\terminal}[1]{\mathbf{t}(#1)}

\begin{document}

\title{Mathematical Informatics: Algorithms\thanks{This work was partially funded by the ANR ANR-22-CE48-0003-01 project DySCo.}}

\author{Thomas Seiller\inst{1}\orcidID{0000-0001-6313-0898}}

\authorrunning{T. Seiller}

\institute{CNRS, JFLI -- IRL 3527, Tokyo, Japan\\
\email{thomas.seiller@cnrs.fr}\\
\url{http://www.seiller.org}}

\maketitle              

\begin{abstract}
This work continues the development of an intensional approach to computability initiated in previous work \cite{seiller-HdR,seiller-MI1}, in which programs and computations, rather than functions, constitute the primary objects of study. In this setting, models of computation are described as monoid actions on a configuration space, and programs as dynamical systems constrained by this action.

Within this framework, we introduce a formal notion of algorithm as a finite directed graph whose edges are labelled by partial maps over an abstract data structure. This definition separates control from data, representing the former as a graph and the latter as an algebra of operations. 
We then define what it means for a program, in a given model of computation, to implement such an algorithm, by requiring a correspondence between computational steps and labelled transitions that preserves the induced transformations on representations of data. This yields a precise notion of implementation and situates algorithms as abstract partial specifications of computational behaviour.

\keywords{Computability \and dynamical systems \and models of computation}
\end{abstract}

\section{Introduction}

The terminology “algorithm” predates the first computers, and does not appear in the earliest works on computability. It originally referred to methods of resolution in mathematics, as reflected for instance in the term algoriste \cite{LamasseAlgoriste}. Its systematic use within computer science appears later, notably within the Russian school \cite{MarkovAlgo,KolmogorovUspensky}, even though the term also already occurs in Church’s 1936 paper \cite{ChurchAlgo}. It was later widely adopted and became central to the discipline, to the point of naming a subfield, algorithmics. More recently, the term has acquired a broader meaning in common usage, where it is used to refer to complex computational systems, in particular machine learning models and platform-based decision systems.

These different usages have been analysed by Airoldi \cite{Airoldi}, who distinguishes between an analog era, a digital era, and a platform era. While this periodisation is insightful, it should not obscure the fact that these meanings coexist today. The term “algorithm” is thus used to designate heterogeneous objects, depending on context. This raises a natural question: do these usages refer to a common underlying notion?

The standpoint adopted in this work is that the mathematical (analog) and computer science (digital) notions coincide at an abstract level, and that they admit a common formalisation. The apparent differences arise from constraints related to implementation and physical realisation, rather than from the notion of algorithm itself. In particular, restricting to digital representations does not adequately capture the diversity of computational paradigms, including analog, quantum, or biological ones. This suggests that the notion of algorithm should be studied independently of any specific choice of model of computation, neither any choice of specific representation of data.

I therefore focus on the problem of formalising the notion of algorithm itself, independently of its implementations. This requires identifying a notion which is sufficiently abstract to encompass the various usages discussed above, while remaining mathematically tractable. Founded upon the intensional approach to computability developed in \cite{seiller-HdR,seiller-MI1}, I propose a definition of algorithms as finite graphs labelled by operations over an abstract data structure. This definition separates control and data, and integrates naturally with the notions of program and computation introduced in the intensional framework. In particular, it allows for a precise formulation of the notion of implementation, relating algorithms to concrete programs across different models of computation. 

By grounding programs in abstract models of computation defined as monoid actions, and algorithms in labelled graphs over abstract data structures, the framework cleanly separates the two notions while making their relationship, implementation via glueing, formally explicit and mathematically tractable. The present proposal therefore satisfies the following four desiderata, which we believe an adequate formalisation of algorithms should satisfy \cite{GoA}:
\begin{enumerate}[noitemsep,nolistsep]
\item distinguish algorithms from the programs that implement them;
\item remain independent of any particular model of computation;
\item provide a mathematically precise notion of implementation;
\item allow meaningful comparison between algorithms and implementations.
\end{enumerate} 

\paragraph{Comparison to other approaches.}

The question of formalising algorithms has been addressed by several authors. Notably, Moschovakis \cite{MoschovakisAlgo,MoschovakisFoundations} and Gurevich \cite{GurevichAlgo,BlassGurevich} have proposed influential frameworks. 

The framework proposed here offers several theoretical advantages over existing approaches to formalising the notion of algorithm. Gurevich's approach shares with the current proposal the definition of algorithms as partial specifications. However, despite their generality as a model of computation, abstract state machines do not explicitly distinguish between algorithms and programs, nor do they provide a formal notion of implementation. As such, they do not directly support comparison between different implementations, programs, or algorithms. 

Moschovakis' recursors fare better in capturing what we intuitively think of as algorithms, but they are limited to recursive functions and the framework does not satisfactorily addresses the program-level. Indeed, the implementation notion via iterators does not provide explicit treatment of representations and fails to address the diversity of choices of models of computation. 

\subsection{Contents}

The proposed mathematical definition of algorithms will be exposed in three steps. This decomposition is meaningful in itself, and provide a pedagogical approach. The first notion is that of a \emph{syntactic algorithm}. A syntactic algorithm describes the structure of a computation: the collection of elementary actions involved and the way in which they are composed. In particular, it does not prescribe any semantics for these actions, i.e. it does not specify the transformations they induce on data.

As such, syntactic algorithms capture a commonly accepted intuition: an algorithm is given by a finite description of successive operations organised by a control structure. However, this description leaves implicit a number of assumptions regarding the meaning of those operations. Making these assumptions explicit is a central issue in any attempt to formalise the notion of algorithm.

Consider for instance the algorithm shown in Figure \ref{gcd2}. It can be found in this shape in numerous textbooks on algorithmics. It is unclear, however, how the operations indicated in the text should be understood. What is the meaning of the operations ``$-$'' and ``$\bmod$''? What is the domain of the variables $x$ and $y$? In standard presentations, one assumes that $x$ and $y$ range over the integers and that these symbols denote the usual arithmetic operations. But nothing in the description itself enforces this interpretation. In fact, much information is usually left implicit: whether the inputs are natural numbers, or polynomials; whether the relation $\neq$ or the operation $\bmod$ are the usual, intuitive ones, or not. 

This observation leads to the idea that the notion of syntactic algorithm should be complemented by a choice of data domain together with interpretations of some primitive operations. This is formalised in the definition of \emph{semantically-specified algorithm}. The introduced distinction between the control and data will form the basis of the formal definitions introduced thereafter. 

However, while semantically-specified algorithms solve this first issue, they impose constraints that are arguably too strong. Indeed, let us consider the algorithm shown in Figure \ref{gcd1} for computing the greatest common divisor. This is what is usually considered as Euclid's algorithm, a name that is used to refer to different interpretations over different domains. For instance, it already appears in Euclide's \emph{Elements} in distinct forms: in Book VII for integers, and in Book X for magnitudes (e.g. line segments) \cite{EuclidVol1,EuclidVol2,EuclidVol3}. Similarly, Euclid's algorithm can be used to compute the greatest common divisor for polynomials, or for elements of a finite field. 

Under the semantically-specified notion, these would constitute distinct algorithms, since the underlying domains and primitive operations differ. Yet one may still wish to regard them as manifestations of a single algorithmic -- Euclid's algorithm -- which can be instantiated over different structures. I therefore introduce a third notion, that of \emph{logically-specified algorithm}, in which the specification is given in the form of a first-order theory rather than a particular model of that theory. This allows primitive operations to be characterised through their formal properties while remaining independent of a specific semantic realisation.

\begin{figure}
\centering
\subfloat[GCD 1 \label{gcd1}]{
\begin{tabular}{l}
Input: $x,y$\\
\textbf{while} $y \neq 0$: \\
\quad \textbf{while} $x \geq y$: \\
\quad\quad $x := x - y$ \\
\quad $z := x$ \\
\quad $x := y$ \\
\quad $y := z$ \\
Output: $x$
\end{tabular}
}
\hspace{2cm}
\subfloat[GCD 2 \label{gcd2}]{
\begin{tabular}{l}
Input: $x,y$\\
\textbf{while} $y \neq 0$: \\
\quad $z := x \bmod y$ \\
\quad $x := y$ \\
\quad $y := z$ \\
Output: $x$
\end{tabular}
}
\caption{Two presentations of Euclid’s algorithm.}
\label{fig:gcd}
\end{figure}

One key observation is that the proposed approach does not define algorithms as equivalence classes of programs. While complex arguments have been developed to explain how this idea fails \cite{WhenAlgoSame,Dean2016Algorithms}, I propose to approach the question by looking at the two examples of algorithms in Figure \ref{fig:gcd}. Here are two different presentations of Euclid's algorithm. Are those equal? Are those disjoint? The answer to both questions is negative. 

Indeed, programs that do implement the algorithm in Figure \ref{gcd1} are also implementations of the algorithm in Figure \ref{gcd2}, implying that they are not disjoint. However, there are implementations of the algorithm in Figure \ref{gcd2} that are not implementations of the algorithm in Figure \ref{gcd1}: one could compute the operation $\bmod$ using a different technique than iterated subtraction.

Thus, implementation induces a non-trivial inclusion structure between algorithms rather than a partition of programs into equivalence classes. Defining algorithms as equivalence classes of programs would therefore fail to capture such asymmetric implementation relations. The relation induced by implementation is consequently better understood as a preorder than as an equivalence relation. This excludes defining algorithms using a \emph{definition by abstraction}; the current approach could be formulated using a more general approach \cite{seiller-Weyl}. It also points to an essential difference between the current framework and the approach initiated by Yanofsky \cite{Yanofsky}.

\section{Models of computation and programs}\label{sec:models}

I first recall the notions of models of computation, programs, and computations introduced in \cite{seiller-MI1}, which will serve as the underlying framework in this work. The approach is intensional: programs and computations, rather than functions, are taken as primary objects.

\begin{definition}[Model of computation]
A \emph{model of computation} is a monoid action
\(
\alpha:\monoid{I}\acton \Space{X}
\)
where:
\begin{itemize}[nolistsep,noitemsep]
\item $\Space{X}$ is a space, called the \emph{configuration space};
\item $I$ is a set of primitive instructions interpreted by a map\footnote{Here $\Space{X}\to\Space{X}$ usually denotes the set of partial endomorphisms of $\Space{X}$.}
\(
\rho:I\to (\Space{X}\to\Space{X})
\)
extending uniquely\footnote{Note that $\monoid{I}$ does not denote here the free monoid, but the monoid generated by $I$ and the relations induced from the interpretation by $\rho$. In other words, $\monoid{I}$ is isomorphic to the sub-monoid of $\Space{X}\to\Space{X}$ generated by the set $\alpha(I)$.} to the monoid action $\alpha$.
\end{itemize}
\end{definition}

The configuration space can be a set, or a more involved structure (topological space, measure space, topological vector space, etc.); it is expected in the latter case that the group action is compatible with the structure. The elements of $I$ are understood as elementary computational steps, while the monoid $\monoid{I}$ encodes their sequential composition. The action of $\monoid{I}$ on $\Space{X}$ describes the evolution of configurations under composed instructions.

This formulation separates the abstract structure of composition, represented by the monoid $\monoid{I}$, from its operational realisation on configurations through the action on $\Space{X}$.

\begin{example}
The quintessential example is that of the one-tape Turing machine model, represented as the action 
\(\amcTM: \monoid{\InstTM}\acton\SpaceTM\). Here the space of configuration $\SpaceTM$ is defined as the space of $\integerN$-indexed sequences in $\{0,1,\star\}$ which are almost-always equal to $\star$:
\[ \{(s_i)_{i\in\integerN} \mid s_i \in \{0,1,\star\}, \Card\{i\in\integerN \mid s_i\neq \star\}<\infty\}. \] The action $\amcTM$ of the set $\InstTM=\{\instright,\instleft\}\cup\{\instwrite{i}\mid i\in\{0,1,\star\}\}\cup\{\instread{i}\mid i\in\{0,1,\star\}\}$ of atomic instructions is then defined by:
\begin{itemize}[nolistsep,noitemsep]
\item 
\(
\amcTM(\instright): (s_i)_{i\in\integerN}\mapsto(s_{i-1})_{i\in\integerN};
\)
\item 
\(
\amcTM(\instleft): (s_i)_{i\in\integerN}\mapsto(s_{i+1})_{i\in\integerN};
\)
\item 
\(
\amcTM(\instwrite{\star}): (s_i)_{i\in\integerN}\mapsto (t_{i})_{i\in\integerN}
\) where $t_0 = \star$ and $t_i = s_i$ for $i\neq 0$;
\item 
\(
\amcTM(\instread{\star}): (s_i)_{i\in\integerN}\mapsto (s_{i})_{i\in\integerN}
\) 
defined if $s_0=\ast$, undefined otherwise.
\end{itemize}
\end{example}

One can then define the set of programs prescribed by a choice of model of computation.

\begin{definition}[Program]
Let
\(
\alpha:\monoid{I}\acton\Space{X}
\)
be a model of computation. An \emph{$\alpha$-program} is a tuple
\(
P=(S,E,s,t,\ell,\initial{P},\terminal{P})
\)
where:
\begin{itemize}[noitemsep,nolistsep]
\item $(S,E,s,t)$ is a directed graph, with
\(
s,t:E\to S
\)
the source and target maps;
\item
\(
\ell:E\to I
\)
is a labelling of edges by primitive instructions;
\item
\(
\initial{P},\terminal{P}\in S
\)
are distinguished initial and terminal states.
\end{itemize}
By convention, the set $\{ e\in E\mid s(e)=\terminal{P}\}$ is empty.
\end{definition}

The graph structure of the program describes its control flow, with the vertices being the control states. Edge labels specify the elementary instructions performed during transitions between control states.

An $\alpha$-program
\(
P=(S,E,s,t,\ell,\initial{P},\terminal{P})
\)
naturally defines a partial dynamical system on the product space
\(
\Space{X}\times S.
\)
Indeed, each edge $e\in E$ determines a partial transition
\(
(x,s(e))\mapsto (\alpha(\ell(e))(x),t(e)),
\)
combining an evolution of configurations in $\Space{X}$ with a transition between control states. The dynamics generated by these transitions describe the execution of the program.

In particular, a computation of the program from a configuration
\(
x_0\in\Space{X}
\)
is simply an orbit of this partial dynamical system starting from
\(
(x_0,\initial{P}).
\)
Equivalently, a computation is given by a sequence
\(
((x_n,s_n))_{n\in\mathbb{N}}
\)
such that for every $n\in\mathbb{N}$ there exists an edge $e_n\in E$ satisfying
\(
s(e_n)=s_n
\), 
\(
t(e_n)=s_{n+1},
\)
and
\(
x_{n+1}=\alpha(\ell(e_n))(x_n).
\)

The outcome of a computation, when defined, is determined by the configuration component of a state
\(
(x,\terminal{P})
\)
reachable from the initial state. More generally, one may consider observables extracted either from configurations or from complete computation traces.

This framework provides a uniform description of a wide range of computational paradigms. Classical models such as finite automata, Turing machines, or abstract state machines\footnote{A more detailed comparison between those notions and Gurevich's ASM can be found in a paper by Valarcher and Anh-Ton Le \cite{ValarcherLe}.} can be expressed through suitable choices of configuration spaces and instruction sets. More generally, the framework accommodates computational systems arising from algebraic, dynamical, geometric, or non-discrete settings.

Within this framework, programs are concrete computational objects defined relative to a fixed model of computation. The notion of algorithm, introduced below, somehow corresponds to replacing the model of computation used in the definition of program by \emph{abstract data structures}, which we will now introduce.

\section{Abstract data structures}

\subsection{Definitions}

One can already find a notion of "abstract data \emph{type}" in the literature. However, this notion is not operational enough for our purposes, and I will rather work with an abstraction of data \emph{structures}: i.e. a set of \emph{values}, together with a set of allowed operations on those values. Underlying this notion, one can identify a notion of \emph{data representation}. I will therefore define simultaneously two notions: that of data domain and that of data structure, which subsumes a choice of data domain.

\begin{definition}
An \index{abstract data!domain} \emph{abstract data domain} $\datadomain{D}$ is a set $D$. An \index{abstract data!structure} \emph{abstract data structure} $\datastruct{D}$ over an abstract data domain $\datadomain{D}$ is a set of allowed operations $S_\datastruct{D}$ -- called \emph{structural maps} -- in $\sum_{k,k'} \datadomain{D}^k\rightharpoonup \datadomain{D}^{k'}$. Any element $s\in S_\datastruct{D}$ belongs to $\datadomain{D}^k\rightharpoonup\datadomain{D}^{k'}$ for fixed values of $k$ and $k'$: we will write $\dom{s}=k$ and $\im{s}=k'$. The \emph{maximal arity} of $\datastruct{D}$ is defined as\footnote{By definition, we consider that $\max S=\infty$ when the set $S$ is unbounded.}
\(\max \{ \dom{s}, \im{s} \mid s\in S_\datastruct{D} \}.\)
\end{definition}

Let us note that once again, no restrictions are imposed on the cardinality of abstract data domains or on the cardinality of the set of allowed maps. A particularly interesting case is when the latter is defined by induction (\autoref{example:recursive}). Similarly, in order to define a proper, usable, abstract data structure one would require some maps relating the representations of $\datadomain{D}^k$ for various values of $k$. But on principle, no such restrictions will be imposed.

This notion is defined in a way which is quite similar to the definition of abstract model of computation: partial maps are used to allow for reading. For instance, a notion of integers with a test to zero could be represented by the set $\naturalN$ together with maps from $\naturalN$ to $\{0,1\}$ but this would imply the introduction of another data type (booleans), or a particular (ad-hoc) choice of encoding of booleans as specific elements of the set $\naturalN$. Partial functions avoid those by allowing to project on a subset; this can then be used to implement any of these other approaches, and this choice allows for self-contained definitions.

I will now give examples. I note that agreeing on a specific axiomatisation for a given data structure is a complex task, and in fact depends on the expected use. As a consequence, the examples below are proposed axiomatisations but could very well be replaced by others. I am not trying to establish those as \emph{standard} definitions, but rather illustrate how one could manipulate the notion. 

\begin{example}[Abstract boolean\index{abstract data structure!booleans}]\label{data:booleans}
One can define an abstract data structure representing booleans as follows: the underlying data domain is $\datadomain{B}=\{0,1\}$, with the structural maps $\instread{0}$, $\instread{1}$, $\instruction{and}$, $\instruction{and}$, and $\instruction{not}$ defined as:
\begin{align*}
\instread{0}: & n\mapsto n \text{ if and only if $n=0$}\\
\instread{1}: & n\mapsto n \text{ if and only if $n=1$}\\
\instruction{and}: & (m,n)\mapsto m\times n \\
\instruction{or}: & (m,n)\mapsto m+n-m\times n \\
\instruction{not}: & n\mapsto 1-n.
\end{align*}
\end{example}

The following definition captures the standard unary representation of natural numbers, which was used in the early work on computability.

\begin{example}[Natural numbers\index{abstract data structure!natural numbers}]\label{data:natural}
One can define an abstract data structure representing natural numbers as follows: the underlying abstract data domain is $\datadomain{N}=\naturalN$, with the structural maps $\instread{0}$, $\instread{S}$, $\mathtt{succ}$ defined by:
\begin{align*}
\instread{0}: & n\mapsto n \text{ if and only if $n=0$}\\
\instread{S}: & n\mapsto n \text{ if and only if $n\neq 0$}\\
\instruction{succ}: & n\mapsto n+1.
\end{align*}

This data structure can be extended by other operations, such as predecessor, addition and multiplication defined as:
\begin{align*}
\instruction{pred}: & n\mapsto n-1 \text{ if and only if $n\neq 0$}\\
\instruction{add}: & (n,m)\mapsto n+m\\
\instruction{mult}: & (n,m)\mapsto n\times m.
\end{align*}
\end{example}

I leave it to the reader to verify that many other abstract structures commonly used in computability and complexity theory, such as lists, trees or graphs, naturally give rise to abstract data structures in the present sense. Additional examples and constructions can be found in \cite{seiller-HdR}. I nevertheless detail one particularly important example, namely that of recursive functions.

The notion of recursive function is sometimes considered as a model of computation. My understanding is that it is not, because the definition of recursive functions is abstracted from computation; it does not determine computation, it only specifies \emph{what} could be computed. As such, I considered for a while that it could be an example of a programming language. But there are no standard operational semantics associated to recursive functions; one may implement the computation of a recursive function in very different ways. This leads to thinking of the notion as something of a more algorithmic nature, separated from consideration on implementations. It finally occurred to me that it was an example of an abstract data structure. 

\begin{example}[Recursive functions\index{abstract data structure!recursive functions}]
\label{example:recursive}
Primitive recursive and general recursive functions naturally define abstract data structures in the present sense. The underlying data domain is
\(
D=\naturalN,
\)
while the structural maps are given by the corresponding classes of recursive functions.
More precisely, one defines inductively a collection $\mathrm{Rec}$ of functions over $\naturalN$ generated from constant functions, the successor function, and projection functions, and closed under composition and primitive recursion, together with minimisation in the case of general recursive functions.
The resulting collection $\mathrm{Rec}$ therefore provides a canonical example of an abstract data structure generated by closure operations on elementary maps.
\end{example}

There are several consequences of this. First, Turing-completeness can be defined as the possibility to implement this data-structure. Second, it provides a new point of view on algebraic characterisation of complexity classes from Implicit complexity (\icc), such as the Bellantoni and Cook result \cite{bellantonicook}. Those can now be understood as defining abstract data structures that capture time- or space-bounded computations, in the same way linear logic based techniques in \icc capture those same classes through types.

\subsection{Representations}

Obviously, this notion of data structure is independent of a chosen machine model. To compute on data structures, one needs to implement them. To be precise, one would have to account for the complexity of the implementation, which could lead to overhead in time or space when simulating programs from a model of computation into another. However, questions of complexity will be the subject of a later paper and will not be considered here.

\begin{definition}
Suppose given a model of computation $\alpha:\monoid{I}\acton \Space{X}$, and an abstract data domain $\datadomain{D}$ with maximal arity $\mathbf{a}$.
An \index{abstract data!interpretation} interpretation of a data domain $\datadomain{D}$ is a map 
\[ \Delta: \cup_{k\leq \mathbf{a}} D^k\rightarrow \Space{X}.\]
\end{definition}

Let us establish some notations that will be useful below. For each sequence $s\in\{0,1,\star\}^n$, write $\underline{\star} s\underline{\star}$ the configuration $(t_i)_{i\in\integerN}$ defined as $t_i = s_i$ for $i=0,1,\dots,n-1$. Given two sequences $s_1,s_2$, write $s_1\star s_2$ the concatenation of $s_1$ and $s_2$. Lastly, for each natural number $n$, write $\overline{n}^2$ the sequence in $\{0,1\}$ corresponding to its binary representation and $1^n$ the sequence consisting of $n$ occurrences of the symbol $1$.

\begin{example}[Interpretations of Booleans]\label{data:imp:booleans}
Let us consider the Boolean data structure defined above (\autoref{data:booleans}).
Note that it is has maximal arity $2$. 
We can define the interpretation in $\SpaceTM$ as follows:
\[
\begin{array}{rrcl}
\DeltaBool: & a \in \BoolDom  &\mapsto& \underline{\star} a \underline{\star}\\
& (a,b) \in \BoolDom^2 &\mapsto& \underline{\star} a \star b \underline{\star}
\end{array}
\]
\end{example}

\begin{example}[Interpretations of natural numbers]\label{data:imp:integers}
One can define two interpretations of the abstract domain of natural numbers in the Turing machines models $amcTM: \monoid{\InstTM}\acton\SpaceTM$ given as the following:
\begin{align*}
\DeltaNatUn: (a_0,a_1,\dots,a_n) & \mapsto \underline{\star} 1^{a_0}\star 1^{a_1}\star \dots \star 1^{a_n}\underline{\star} \\
\DeltaNatBin: (a_0,a_1,\dots,a_n) & \mapsto \underline{\star} \overline{a_0}^2\star \overline{a_1}^2\star \dots \star \overline{a_n}^2\underline{\star}
\end{align*}
\end{example}

\begin{definition}
Suppose given a model of computation $\alpha:\monoid{I}\acton \Space{X}$, and an abstract data structure $\datastruct{D}=(\datadomain{D},S_D)$. An \index{abstract data!implementation} implementation of $\datastruct{D}$ in $\alpha$ is an interpretation $\Delta$ of the underlying data domain $\datadomain{D}$ together with a map $\delta: S_D\rightarrow \programs{\alpha}$ such that for all $d\in D^k$, $\exists n, \delta(f)^n(\Delta(d))=\Delta(f(d))$.
\end{definition}

The reader is referred to previous work \cite{seiller-HdR} for concrete examples of implementations of the above data structures, and others, in various abstract models of computation.

\subsection{A bit of history}

Recall that the Church-Turing thesis states that the set of  \emph{effectively computable functions\footnote{The original statements considered only functions from natural numbers to natural numbers, although some extensions were considered later on \cite{KleeneRecursiveFunctionals}.}} (which is an informal notion) is equal to the set of functions computable by Turing machines (or equivalently, are computable in lambda-calculus). A forceful argument for the Church-Turing thesis is that the following sets of functions (on natural numbers) are equal: functions computed by Turing machines, and lambda-definable functions.

It is interesting to dive into the research work of the time to understand which results are proven exactly. It turns out that the above equivalence is, from a purely formal point of view, not properly established. I am obviously not putting the result into question, but if one reformulates the exact statements of theorem proven in these papers, one obtains the following:
\begin{itemize}[noitemsep,nolistsep]
\item In his 1937 paper (Computability and $\lambda$-Definability), Turing shows\footnote{Although he writes "No attempt is being made to give a formal proof that this machine has the properties claimed for it. Such a formal proof is not possible unless the ideas of substitution and so forth occurring in the definition of conversion are formally defined, and the best form of such a definition is possibly in terms of machines".} that if $f$ is a lambda-definable function (on Church numerals), then there exists a Turing Machine whose output is the sequence 
\[ \underbrace{11 \dots 1}_{f(0)\text{ symbols}}0\underbrace{11 \dots 1}_{f(1)\text{ symbols}} 0\dots 0 \underbrace{11 \dots 1}_{f(n)\text{ symbols}} 0 \dots \]
\item In the same paper, Turing also shows that if there exists a Turing Machine whose output is the sequence above, then $f$ is general recursive. This is done by means of an encoding of configurations as integers. 
\item In his 1936 paper ($\lambda$-definability and recursiveness), Kleene shows that every general recursive function is $\lambda$-definable using a \emph{shifted version of Church numerals}, in which $0$ is represented as $1$, $2$ represented as $3$, etc.
\end{itemize}

The representation of computable functions by Turing shows that the equivalence considered is based on \emph{unary} representations of integers. Moreover, the choice in the definition of what a computable function is leaves open the question of the representation of partial functions. Indeed, the result applies only to total functions\footnote{This shows that, even though Turing knows that Turing machines compute partial functions, his notion of \emph{computable function} presupposes totality.}. Lastly, one should notice that Kleene uses a different interpretation of (unary) natural numbers.

I will not detail here how the following results can be proved. However, these results can be restated as\footnote{Although binary representations are not considered in these papers, it is easily obtained in a similar manner}:
\begin{itemize}[noitemsep,nolistsep]
\item Lambda-calculus interprets the data structure of partial recursive functions (and can interpret it for both a unary and binary representation of the underlying data domain);
\item Turing machines interpret the data structure of partial recursive functions (and can interpret it for both a unary and binary representation of the underlying data domain);
\item One can encode the elements of $\SpaceTM$ as natural numbers and show that, based on this encoding, each instruction is interpreted as a partial recursive function;
\item One can encode the elements of $\Lambda$ as natural numbers and show that, based on this encoding, that a step of $\beta$-reduction is interpreted as a partial recursive function.
\end{itemize}
Based on the encoding of configurations, one can obtain a simulation of lambda-calculus by Turing machines (and conversely). This simulation is, based on defintions from \cite{seiller-MI1}, a data-constrained program-wise simulation, which means that the above results could be used as an algorithmic Church-Turing thesis, namely that\footnote{Here I restrict the discussion to functions from $\naturalN$ to $\naturalN$ for simplicity.} \emph{the set of effective methods to compute functions is equal to the set of Turing machines (computing functions)}.

Let us note however that this alternative statement does not account for complexity. Indeed, the quantitative aspects of the simulations above imply that time and space constrained computations do not coincide. 

\section{Algorithms}

\subsection{Syntactic structure and glueing}

\begin{definition}
A \index{algorithm!syntactic}syntactic algorithm is a finite labelled graph (with control states) $A=(S,E,s,t,\initial{A},\terminal{A},L,\ell)$ where $S$ is a finite set of \emph{states} containing the initial and terminal states $\initial{A}$ and $\terminal{A}$, $(S,E,s,t)$ is a directed graph, $L$ a finite set of \emph{labels}, and $\ell: E\rightarrow L$ is a labelling function.

The set of all algorithms is written as $\algorithms$.
\end{definition}

The guiding intuition behind this definition is that the graph describes a general control structure, while the labels corresponds to the name of operations used in the algorithm. For the moment, the algorithm is \emph{syntactic} because the labels are just arbitrary names. The notions introduced in the next sections refine the definition by associating labels to more restricted classes of operations. Before going through those definitions, we can already define what it means for a program to \emph{implement} a syntactic algorithm.

\begin{definition}[Glueing]
Let $A=(V,S,\initial{A},\terminal{A},E,s,t,L,\ell)$ be a syntactic algorithm, and an \amc $\alpha:\monoid{I}\acton\Space{X}$. Suppose moreover given a map $\phi: L \rightarrow \programs{\alpha}$. The pre-glueing of $\phi$ along $A$ is the machine $\mathcal{P}(A,\phi)$ defined as the disjoint union $\sum_{e\in E} \phi(\ell(e))$. The \emph{glueing of $\phi$ along $A$} \index{algorithm!glueing} is then defined as the program $\mathcal{G}(A,\phi)$ obtained by identifying for all $v\in V$ the states $\{ \initial{e,\phi(e)}\mid s(e)=v \}\cup \{ \terminal{e,\phi(e)}\mid t(e)=v \}$, and setting $\initial{\mathcal{G}(A,\phi)}=\initial{A}$ and $\terminal{\mathcal{G}(A,\phi)}=\terminal{A}$.
\end{definition}

We illustrate the notion of glueing in \autoref{fig:glueing}. 

\begin{figure}
\begin{tabular}[b]{c}
\subfloat[Program $M$]{
\begin{tikzpicture}[scale=0.7,baseline=(current bounding box.north)]
\node (inita2) at (3,-1) {\footnotesize{$i_2$}};
\node (terma2) at (3,-3) {\footnotesize{$t_2$}};
\node (internala2d) at (3,-2) {\footnotesize{$d$}};
\node (internala2e) at (4,-1.5) {\footnotesize{$e$}};
\node (internala2f) at (4,-2.5) {\footnotesize{$f$}};
\node (internala42a) at (2.5,-4) {\footnotesize{$a_2$}};
\node (internala42b) at (3.5,-4) {\footnotesize{$b_2$}};
\node (terma1) at (1,-3) {\footnotesize{$t_1$}};
\node (internala1) at (2,-2) {\footnotesize{$c$}};
\node (terma41) at (1,-5) {\footnotesize{$t_4$}};
\node (internala41a) at (0.5,-4) {\footnotesize{$a_1$}};
\node (internala41b) at (1.5,-4) {\footnotesize{$b_1$}};

\draw[->,black!40!red] (terma1) -- (internala41a) {};
\draw[->,black!40!red] (internala41a) .. controls (0,-3.5) and (0.5,-2.5) .. (terma1) {};
\draw[->,black!40!red] (terma1) -- (internala41b) {};
\draw[->,black!40!red] (internala41b) -- (terma41) {};

\draw[->,black!40!blue] (inita2) -- (internala1) {};
\draw[->,black!40!blue] (internala1) -- (terma1) {};

\draw[->,black!40!red] (terma2) -- (internala42a) {};
\draw[->,black!40!red] (internala42a) .. controls (2,-3.5) and (2.5,-2.5) .. (terma2) {};
\draw[->,black!40!red] (terma2) -- (internala42b) {};
\draw[->,black!40!red] (internala42b) .. controls (5,-4) and (5.5,-0.5) .. (inita2) {};

\draw[->,black!60!green] (inita2) -- (internala2d) {};
\draw[->,black!60!green] (internala2d) -- (terma2) {};
\draw[->,black!60!green] (inita2) -- (internala2e) {};
\draw[->,black!60!green] (internala2e) -- (internala2f) {};
\draw[->,black!60!green] (internala2f) to [bend right=30] (internala2d) {};
\draw[->,black!60!green] (internala2d) to [bend right=30] (internala2f) {};
\draw[->,black!60!green] (internala2f) .. controls (4.5,-2) and (4.5,-1) .. (inita2) {};
\end{tikzpicture}
}\\
\subfloat[Algorithm $A$]{
\begin{tikzpicture}[scale=0.7,baseline=(current bounding box.north)]
\node[circle, draw] (begin) at (0,1) {$\cdot$};
\node[circle, draw] (ifl) at (-0.5,0) {$\cdot$};
\node[circle, draw] (ifr) at (1,-1) {$\cdot$};
\node[circle, draw] (end) at (-1,-1) {$\cdot$};
\draw[->,black!40!blue] (begin) -- (ifl) node (a1) [midway,above,sloped] {\footnotesize{$a_1$}};
\draw[->,black!60!green] (begin) -- (ifr) node (a2) [midway,above,sloped] {\footnotesize{$a_2$}};
\draw[->,black!40!red] (ifl) -- (end) node (a41) [midway,above,sloped] {\footnotesize{$a_4$}};
\draw[->,black!40!red] (ifr) .. controls (2,-0.5) and (1,1.5) .. (begin) node (a42) [midway,above,sloped] {\footnotesize{$a_4$}};
\end{tikzpicture}
}
\end{tabular}
\subfloat[$M$ seen as a glueing along an algorithm $A$]{
\begin{tikzpicture}[scale=0.85,baseline=(current bounding box.north)]
\node[circle, draw] (begin) at (0,1) {$\cdot$};
\node[circle, draw] (ifl) at (-0.5,0) {$\cdot$};
\node[circle, draw] (ifr) at (1,-1) {$\cdot$};
\node[circle, draw] (end) at (-1,-1) {$\cdot$};

\draw[->,black!40!blue] (begin) -- (ifl) node (a1) [midway,above,sloped] {\footnotesize{$a_1$}};
\draw[->,black!60!green] (begin) -- (ifr) node (a2) [midway,above,sloped] {\footnotesize{$a_2$}};
\draw[->,black!40!red] (ifl) -- (end) node (a41) [midway,above,sloped] {\footnotesize{$a_4$}};
\draw[->,black!40!red] (ifr) .. controls (2,-0.5) and (1,1.5) .. (begin) node (a42) [midway,above,sloped] {\footnotesize{$a_4$}};

\node[black!40!red] (inita42) at (3,3) {\footnotesize{$i_4$}};
\node[black!40!red] (terma42) at (3,1) {\footnotesize{$t_4$}};
\node[black!40!red] (internala42a) at (2.5,2) {\footnotesize{$a$}};
\node[black!40!red] (internala42b) at (3.5,2) {\footnotesize{$b$}};

\draw[<->,dotted] (ifr) to [bend right=15] (inita42) {};
\draw[<->,dotted] (begin) -- (terma42) {};
\draw[->,black!40!red] (inita42) -- (internala42a) {};
\draw[->,black!40!red] (internala42a) .. controls (2,2.5) and (2.5,3.5) .. (inita42) {};
\draw[->,black!40!red] (inita42) -- (internala42b) {};
\draw[->,black!40!red] (internala42b) -- (terma42) {};

\node[black!40!red] (inita41) at (-3,-1) {\footnotesize{$i_4$}};
\node[black!40!red] (terma41) at (-3,-3) {\footnotesize{$t_4$}};
\node[black!40!red] (internala41a) at (-3.5,-2) {\footnotesize{$a$}};
\node[black!40!red] (internala41b) at (-2.5,-2) {\footnotesize{$b$}};

\draw[<->,dotted] (ifl) -- (inita41) {};
\draw[<->,dotted] (end) -- (terma41) {};
\draw[->,black!40!red] (inita41) -- (internala41a) {};
\draw[->,black!40!red] (internala41a) .. controls (-4,-1.5) and (-3.5,-0.5) .. (inita41) {};
\draw[->,black!40!red] (inita41) -- (internala41b) {};
\draw[->,black!40!red] (internala41b) -- (terma41) {};

\node[black!40!blue] (inita1) at (-3,3) {\footnotesize{$i_1$}};
\node[black!40!blue] (terma1) at (-3,1) {\footnotesize{$t_1$}};
\node[black!40!blue] (internala1) at (-3,2) {\footnotesize{$c$}};

\draw[<->,dotted] (begin) -- (inita1) {};
\draw[<->,dotted] (ifl) -- (terma1) {};
\draw[->,black!40!blue] (inita1) -- (internala1) {};
\draw[->,black!40!blue] (internala1) -- (terma1) {};

\node[black!60!green] (inita2) at (3,-1) {\footnotesize{$i_2$}};
\node[black!60!green] (terma2) at (3,-3) {\footnotesize{$t_2$}};
\node[black!60!green] (internala2d) at (3,-2) {\footnotesize{$d$}};
\node[black!60!green] (internala2e) at (4,-1.5) {\footnotesize{$e$}};
\node[black!60!green] (internala2f) at (4,-2.5) {\footnotesize{$f$}};

\draw[<->,dotted] (begin) -- (inita2) {};
\draw[<->,dotted] (ifr) -- (terma2) {};
\draw[->,black!60!green] (inita2) -- (internala2d) {};
\draw[->,black!60!green] (internala2d) -- (terma2) {};
\draw[->,black!60!green] (inita2) -- (internala2e) {};
\draw[->,black!60!green] (internala2e) -- (internala2f) {};
\draw[->,black!60!green] (internala2f) to [bend right=30] (internala2d) {};
\draw[->,black!60!green] (internala2d) to [bend right=30] (internala2f) {};
\draw[->,black!60!green] (internala2f) to [bend right=90] (inita2) {};
\end{tikzpicture}
}
\caption{Example of glueing}\label{fig:glueing}
\end{figure}

\begin{definition}
A program $G$ \index{algorithm!syntactic!implementation} \emph{implements} the algorithm $A$ when there exists $\phi: L \rightarrow \programs{\alpha}$ such that $G$ is the glueing of $A$ along $\phi$.
\end{definition}

\subsection{Semantically-specified algorithms}

The next step is to impose that labels correspond to specific maps defined as part of data structures. In practice, these data structures are implicitly given when writing down algorithm, e.g. using pseudo-code. 

Let us note that if an algorithm uses two data structures, say integers and booleans, or simply manipulates several elements of a given data structures, the implicit assumption is that several copies of these data structures are simultaneously interpreted within the implementing model of computation. Formally, this is equivalent to considering products of data structures: if $\datastruct{D}=(\datadomain{D},S)$ and $\datastruct{D'}=(\datadomain{D'},S')$ are data structures, then their product $\datastruct{D}\times\datastruct{D'}$ is defined as $(\datadomain{D}\times\datadomain{D'},S\bar{\times} S')$ where: 
\[ S\bar{\times}S' = \{ s\times \identity[\datadomain{D'}] \mid s\in S \}\cup  \{ \identity[\datadomain{D}]\times s' \mid s'\in S' \}. \]
As a consequence, one can define algorithms w.r.t. a single data structure.

\begin{definition}[Semantically-specified algorithm]
Let $\datastruct{D}=(\datadomain{D},S)$ be a data structure. A \emph{semantically-specified algorithm} with respect to $\datastruct{D}$ is defined as a tuple $A=(V,\initial{A},\terminal{A},E,s,t,L,\ell,\Interpret{\cdot})$ where:
\begin{itemize}[noitemsep,nolistsep]
\item $(V,\initial{A},\terminal{A},E,s,t,L,\ell)$ is a syntactical algorithm;
\item $\Interpret{\cdot}: L \rightarrow S$ maps the labels to structural maps.
\end{itemize}
The set of all specified algorithms w.r.t. $\datastruct{D}$ is written $\algorithms[\datastruct{D}]$.
\end{definition}

\begin{definition}[Coherent labelings]
Suppose given a specified algorithm $A=(V,S,\initial{A},\terminal{A},E,s,t,L,\ell,\Interpret{\cdot})$ for a data structure $\datastruct{D}=(\datadomain{D},S)$, and a model of computation $\alpha:\monoid{I}\acton\Space{X}$. A map $\phi: L \rightarrow \programs{\alpha}$ is \emph{coherent} with $\Interpret{\cdot}$ with respect to an interpretation $\Delta$ of $\datadomain{D}$ when for all $o\in L$, $\phi(o)$ is an implementation of $\Interpret{o}$ w.r.t. $\Delta$.
\end{definition}

\begin{definition}
A program $G$ \emph{implements} the specified algorithm $A$ when there exists $\phi: L \rightarrow \programs{\alpha}$ coherent with $\Interpret{\cdot}$ w.t. $G$ is the glueing of $A$ along $\phi$.
\end{definition}

This notion of semantically-specified algorithm could probably be stronger than what one would expect. One important remark here is that the notion of data structure is set-theoretic and requires one to fix a concrete domain. As a consequence, the notion does not allow to identify Euclidean division on integers and Euclidean division on polynomials because the underlying abstract data domain is different. The following proposed definition proposes a solution to this problem.

\subsection{Logically specified data structures}

\begin{definition}
A \emph{logical data structure} $\mathcal{D}$ is defined as first-order logical theory over a first-order language $(\mathrm{Var},\mathrm{Fun},\mathrm{Rel})$. 

An abstract data structure $\datastruct{D}$ is \emph{a model of the logical data structure $\mathcal{D}$} if it is a model of $\mathcal{D}$, i.e. if $\datastruct{D}\vDash \mathcal{D}$. 
\end{definition}

\begin{definition}[Logically-specified algorithm]
Let $\mathcal{D}$ be a logical data structure over a first-order language $(\mathrm{Var},\mathrm{Fun},\mathrm{Rel})$. A \emph{logically-specified algorithm} with respect to $\mathcal{D}$ is a tuple $A=(V,\initial{A},\terminal{A},E,s,t,L,\ell,\Interpret{\cdot})$ where:
\begin{itemize}[noitemsep,nolistsep]
\item $(V,\initial{A},\terminal{A},E,s,t,L,\ell)$ is a syntactical algorithm;
\item $\Interpret{\cdot}: L \rightarrow \mathrm{Fun}\cup\mathrm{Rel}$ maps the labels to structural maps.
\end{itemize}
The set of all specified algorithms w.r.t. $\mathcal{D}$ is written $\algorithms[\mathcal{D}]$.
\end{definition}

Given a logically-specified algorithm $A$ w.r.t. $\mathcal{D}$ and a model $\datastruct{D}$ of $\mathcal{D}$, one can deduce the unique induced semantically-specified algorithm $A_{\mathcal{D}\vDash \datastruct{D}}$. Implementing $A$ then boils down to implementing $A_{\mathcal{D}\vDash \datastruct{D}}$ for a model $\datastruct{D}$ of $\mathcal{D}$.

\begin{definition}
A program $P$ implements the logically specified algorithm $A$ if there exists an abstract data structure $\datastruct{D}$ s.t. $P$ implements the induced specified algorithm $A_{\mathcal{D}\vDash \datastruct{D}}$.
\end{definition}

Note that Euclidean division on integers and on arbitrary Euclidean ring are but different models of the same logical data structure. The corresponding notion of \emph{logically specified algorithm}, intermediate between the syntactical and specified notions considered above, then allows to identify the Euclidean division algorithms defined on different Euclidean rings.

The reader familiar with Gurevich's \emph{abstract state machines} \cite{ASMgurevich} will probably be curious about potential connections with the notion just developed since both are based on first order structures. While there may be a formal relation between the approaches, I note that Gurevich uses the first-order structure to define the states (i.e. the space underlying the \amc) while it is here used to describe instructions (or rather, more precisely, programs) that can be performed. This seems to be a fundamental difference. One consequence of this is that Gurevich modifies the values of the interpretations of functions and relation symbols, while here the interpretation is fixed once and for all.

\subsection{Properties}

The definition of \emph{glueing} can be adapted to define glueing of algorithms (instead of programs) along another algorithm. This leads to the definition of a preorder on the set of algorithms.

\begin{definition}[Glueing of algorithms along a labelled graph]
Suppose given a syntactic algorithm $A=(V,S,\initial{A},\terminal{A},E,s,t,L,\ell)$, and an \amc $\alpha:\monoid{I}\acton\Space{X}$. Suppose moreover given a map $\phi: L \rightarrow \algorithms$. The pre-glueing of $\phi$ along $A$ is defined as the disjoint union $\mathcal{P}(A,\phi)=\sum_{e\in E} \phi(\ell(e))$. The \emph{glueing of $\phi$ along $A$} \index{algorithm!glueing (algorithm)} is then defined as the algorithm $\mathcal{G}(A,\phi)$ obtained by identifying for all $v\in V$ the vertices $\{ \initial{e,\phi(e)}\mid s(e)=v \}\cup \{ \terminal{e,\phi(e)}\mid t(e)=v \}$, and setting $\initial{\mathcal{G}(A,\phi)}=\initial{A}$ and $\terminal{\mathcal{G}(A,\phi)}=\terminal{A}$.

If $\phi$ maps $L$ into $\algorithms[D]$ (for $D$ either an abstract data structure or a logical data structure), the resulting algorithm belongs to $\algorithms[D]$.
\end{definition}

\begin{proposition}\label{prop:algoorder}
If $B$ is obtained as the glueing of $A$ along $\phi$, and $P$ is a program implementing $B$, then $P$ also implements $A$.
\end{proposition}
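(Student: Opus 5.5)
The plan is to prove that glueing is associative: glueing programs along a glued algorithm is the same as glueing them in two stages. Concretely, suppose $B=\mathcal{G}(A,\phi)$ with $\phi: L_A\to\algorithms$, and $P=\mathcal{G}(B,\psi)$ for some $\psi: L_B\to\programs{\alpha}$. Here $L_B$ is the disjoint union of the label sets of the algorithms $\phi(o)$, for $o\in L_A$ occurring on edges of $A$. I will define $\chi: L_A\to\programs{\alpha}$ by $\chi(o)=\mathcal{G}(\phi(o),\psi|_{L_{\phi(o)}})$. So $\chi(o)$ is the program obtained by glueing $\psi$ along the sub-algorithm $\phi(o)$ alone. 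The goal is then to show $\mathcal{G}(A,\chi)=P$ up to isomorphism of programs, i.e. renaming of states; this gives that $P$ implements $A$.

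The key steps are as follows.

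\begin{enumerate}
\item Describe the states and edges of both sides explicitly.
\begin{itemize}
\item $P$ is the quotient of $\sum_{e'\in E_B}\psi(\ell_B(e'))$ under the identifications induced by the vertices of $B$.
\item The vertices of $B$ are themselves a quotient of $\sum_{e\in E_A}V_{\phi(\ell(e))}$ under the identifications induced by the vertices of $A$.
\item Since $E_B=\sum_{e\in E_A}E_{\phi(\ell(e))}$, the pre-glueing $\sum_{e\in E_A}\sum_{e'\in E_{\phi(\ell(e))}}\psi(\ell(e'))$ is the same disjoint union for both sides. This is just associativity of disjoint sums.
\end{itemize}
\item Compare the equivalence relations on states. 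Both are generated by the same two kinds of identifications:
\begin{itemize}
\item the \emph{inner} ones, which identify initial and terminal states of the $\psi$-pieces according to vertices internal to each $\phi(\ell(e))$;
\item the \emph{outer} ones, coming from vertices of $A$, which identify $\initial{\phi(\ell(e))}$ and $\terminal{\phi(\ell(e))}$ across different edges $e$.
\end{itemize}
In $\mathcal{G}(A,\chi)$, the inner identifications are performed first, inside each $\chi(o)$. The outer ones are then applied, through $\initial{\chi(o)}=\initial{\phi(o)}$ and $\terminal{\chi(o)}=\terminal{\phi(o)}$. In $P$, both kinds are performed together, via the vertex identifications of $B$. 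Quotienting by a union of relations in stages or all at once yields the same quotient, so the state sets coincide.
\item Check that edges, labels, the initial state and the terminal state agree. Both sides set the initial state to $\initial{A}$ and the terminal state to $\terminal{A}$, since $\initial{B}=\initial{A}$ and $\terminal{B}=\terminal{A}$.
\end{enumerate}

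For the semantically or logically specified versions, coherence must also be checked. If $\psi$ is coherent with $\Interpret{\cdot}_B$ and $\phi$ maps into $\algorithms[D]$, then each $\chi(o)$ is by definition a program implementing the specified algorithm $\phi(o)$. For coherence to make sense in that case, the labels of $A$ should be read as interpreted by the algorithms $\phi(o)$. For logically specified algorithms, one uses the same model $\datastruct{D}$ witnessing that $P$ implements $B$.

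The main obstacle is bookkeeping in step 2. One subtlety concerns repeated labels: $\phi$ is defined on labels, not edges, so two edges of $A$ carrying the same label get disjoint copies of $\phi(o)$. I will therefore index everything by edges, as the definition's sum $\sum_{e\in E}$ already does. Correspondingly, $\psi$ must be constant across copies of the same label of $\phi(o)$, and I would first verify this, or else take $\chi$ on each edge-copy. Should this fail, I would note that the definition of implementation needs a mild edge-indexed reading.
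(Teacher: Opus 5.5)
Your proposal is correct and follows the same route as the paper: your $\chi(o)=\mathcal{G}(\phi(o),\psi|_{L_{\phi(o)}})$ is exactly the map $\theta$ on $L^A$ that the paper's sketch deduces from $\phi$ and $\psi$. Your check of $\mathcal{G}(A,\chi)=P$ (identical pre-glueing, with the inner and outer identifications taken in stages rather than all at once) supplies the details the paper leaves as ``not difficult to check''. One simplification: under your own convention that $L_B$ is a disjoint union indexed by labels $o$ rather than by edges, the edge-copies of $\phi(o)$ share their labels, so $\psi$ is automatically constant across them and $\chi$ is well defined with nothing further to verify.
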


\begin{proof}
Here is a sketch of the proof, avoiding painful details. If $P$ implements $B$, then $P$ is a glueing of $B$ along some $\psi:L^B\rightarrow\programs{\alpha}$. Moreover, $B$ is a glueing of $A$ along some $\phi:L^{A}\rightarrow \algorithms$. Now, from both these maps one can deduce a $\theta: L^A\rightarrow\programs{\alpha}$. It is not difficult to check that $P$ is the glueing of $A$ along this map $\theta$.
\end{proof}

The notion of algorithm obtained does not correspond to an equivalence relation on programs. In particular, one program implements many different algorithms. In some way, the notion is almost topological, and one could expect some separation axiom to be satisfied (probably the $T_0$ axiom: that if $P$ and $Q$ are different programs, there exists an algorithm $A$ such that one of $P$ and $Q$ implement $A$ but not the other).

Among the future research in that direction, one direction seems of particular importance: the definition of a notion of \emph{distance}. A distance between algorithms, together with a notion of \emph{implementation/glueing} up to some error $\epsilon$. This could be used to talk about convergence of programs toward an algorithm for instance.

\begin{figure}
\subfloat[Algorithm $A$]{
\begin{tikzpicture}[y=1.4cm]
\node[circle, draw] (begin) at (0,0) {~};
\node[circle, draw] (ifl) at (-0.5,-1) {~};
\node[circle, draw] (ifr) at (1,-2) {~};
\node[circle, draw] (end) at (-1,-2) {~};

\draw[->] (begin) -- (ifl) node [midway,left] {\footnotesize{$y=0$}};
\draw[->] (begin) -- (ifr) node [midway,above,sloped] {\footnotesize{$y\neq0$}};
\draw[->] (ifl) -- (end) node [midway,left] {\footnotesize{$\textrm{return }x$}};
\draw[->] (ifr) .. controls (2,-2) and (2,0.5) .. (begin) node [midway,right] {\footnotesize{\begin{tabular}{l}$\textcolor{teal}{y=x\mathrm{~mod~}y}$\\$\textcolor{teal}{x=y}$\end{tabular}}};
\end{tikzpicture}
}
\subfloat[Algorithm $B$]{
\begin{tikzpicture}[y=1.5cm]
\node[circle, draw] (begin) at (0,0) {~};
\node[circle, draw] (ifl) at (-0.5,-1) {~};
\node[circle, draw] (ifr) at (1.5,-1) {~};
\node[circle, draw] (end) at (-1,-2) {~};
\node[circle, draw, teal] (ifrifl) at (0.5,-2) {~};
\node[circle, draw, teal] (ifrifr) at (2.5,-2) {~};

\draw[->] (begin) -- (ifl) node [midway,left] {\footnotesize{$y=0$}};
\draw[->] (begin) -- (ifr) node [midway,above,sloped] {\footnotesize{$y\neq0$}};
\draw[->] (ifl) -- (end) node [midway,left] {\footnotesize{$\textrm{return }x$}};
\draw[->,teal] (ifr) -- (ifrifl) node [midway,above,sloped] {\footnotesize{$x\geqslant y$}};
\draw[->,teal] (ifr) -- (ifrifr) node [midway,sloped,above] {\footnotesize{$x<y$}};
\draw[->] (ifrifr) .. controls (3,-2) and (3,0.5) .. (begin) node [midway,right] {\footnotesize{\begin{tabular}{l}$\textcolor{teal}{x=y}$\\$\textcolor{teal}{y=x}$\end{tabular}}};
\draw[->,teal] (ifrifl) .. controls (0,-1.5) and (0,-0.5) .. (ifr) node [sloped,midway] {\footnotesize{\begin{tabular}{l}$x=x-y$\end{tabular}}};
\end{tikzpicture}
}
\caption{Two specified algorithms}\label{fig:specalgorithms}
\end{figure}

\section{Examples}

\paragraph{Variants of the \textsc{gcd} algorithm.}

As a first example, I simply give in Figure \ref{fig:specalgorithms} two representations of the \textsc{gcd} algorithms shown in Figure \ref{fig:gcd}. These algorithms can be considered as \emph{logically specified}, where the data structure is defined as a model of a first order theory of Euclidean rings. Note that these provide an example of the preorder just defined: the algorithm $B$ is obtained as a glueing of some specific algorithm for computing the remainder along the algorithm $A$.

\paragraph{The merge sort algorithm.}\label{sec:mergesort}

The merge sort algorithm, and more generally the recursive algorithms, are examples of algorithms which are simultaneously defined with the data structure. I.e. the merge sort algorithm is defined as a specified algorithm with respect to a data structure that contains a sorting algorithm. 

More precisely, suppose one wants to define the merge sort algorithm. The first thing is to define the data structure. One will require here a structure that allows for working with several lists, and compare elements (say integers). Defining a recursive algorithm then corresponds to the following: one adds to the data structure the corresponding structural map $\instruction{sort}:\DeltaList\rightarrow\DeltaList$. To ease the argument, suppose that there exists a structural map $\instruction{split}:\DeltaList\rightarrow\DeltaList\times\DeltaList$ defined as $s_0,s_1,\dots,s_k\mapsto (s_0,s_2,s_4,\dots)\times(s_1,s_3,s_5,\dots)$. It should be clear that if such a map is not given as part of the structure, one can define a corresponding algorithme using more elementary operations on list. One can then define the sorting algorithm shown in \autoref{fig:mergesort}. Note that in the figures, the initial state is represented by an incoming edge and the terminal state by a double circle. 

\begin{figure}[t]
\centering

\subfloat[The mergesort algorithm\label{fig:mergesort}]{
\scalebox{0.7}{
\begin{tikzpicture}[x=2cm,y=2cm]
\node[circle, draw] (init) at (0,0) {~};
\node[circle, draw] (left) at (-1,-1) {~};
\node[circle, draw] (right) at (1,-1) {~};
\node[circle, draw] (split) at (1,-2) {~};
\node[circle, draw] (sorted) at (1,-3) {~};
\node[circle, draw] (nonempty) at (1,-4) {~};
\node[circle, draw] (combinel) at (0,-5) {~};
\node[circle, draw] (combiner) at (2,-5) {~};
\node[circle, draw, thick, double] (end) at (-2,-2) {~};

\draw[->,thick] (0,0.3) -- (init);

\draw[->] (init) -- (left)
node [midway,above,sloped] {\scriptsize{$x=[]$}};
\draw[->] (init) -- (right)
node [midway,above,sloped] {\scriptsize{$x\neq []$}};
\draw[->] (right) -- (split)
node [midway,above,sloped] {\scriptsize{$a,b=\instruction{split}(x)$}};
\draw[->] (split) -- (sorted)
node [midway,above,sloped] {\scriptsize{$\instruction{sort}(a);\instruction{sort}(b)$}};
\draw[->] (sorted) -- (nonempty)
node [midway,above,sloped] {\scriptsize{$a\neq[], b\neq []$}};
\draw[->] (nonempty) -- (combinel)
node [midway,sloped,above] {\scriptsize{$\instruction{fst}(a)\leq \instruction{fst}(b)$}};
\draw[->] (nonempty) -- (combiner)
node [midway,sloped,above] {\scriptsize{$\instruction{fst}(a)> \instruction{fst}(b)$}};
\draw[->] (combinel)
.. controls (-0.5,-4) and (0.5,-3) ..
(sorted)
node [midway,sloped,above]
{\scriptsize{$y=y+[a];\instruction{queue}(x)$}};
\draw[->] (combiner)
.. controls (2.5,-4) and (1.5,-3) ..
(sorted)
node [midway,sloped,above]
{\scriptsize{$y=y+[b];\instruction{queue}(y)$}};
\draw[->] (left) -- (end)
node [midway,above,sloped]
{\scriptsize{$\instruction{return}(x)$}};
\draw[->] (sorted) -- (end)
node [near end,above,sloped]
{\scriptsize{$\instruction{return}(y+a+b)$}};
\end{tikzpicture}
}
}
\begin{tabular}[b]{c}
\subfloat[Outer structure \label{fig:recmergesort}]{
\scalebox{0.6}{
\begin{tikzpicture}[x=2cm,y=2cm]
\node[circle, draw] (init) at (0,0) {~};
\node[circle, draw] (left) at (-1,-1) {~};
\node[circle, draw] (right) at (1,-1) {~};
\node[circle, draw] (split) at (1,-2) {~};
\node[circle, draw] (sorted) at (1,-3) {~};
\node[circle, draw, thick, double] (end) at (-2,-2) {~};

\draw[->,thick] (0,0.3) -- (init);

\draw[->] (init) -- (left)
node [midway,above,sloped] {\scriptsize{$x=[]$}};
\draw[->] (init) -- (right)
node [midway,above,sloped] {\scriptsize{$x\neq []$}};
\draw[->] (right) -- (split)
node [midway,above,sloped] {\scriptsize{$a,b=\instruction{split}(x)$}};
\draw[->] (split) -- (sorted)
node [midway,above,sloped] {\scriptsize{$\instruction{sort}(a);\instruction{sort}(b)$}};
\draw[->] (sorted) -- (end)
node [midway,above,sloped] {\scriptsize{$\instruction{merge}$}};
\draw[->] (left) -- (end)
node [midway,above,sloped]
{\scriptsize{$\instruction{return}(x)$}};
\end{tikzpicture}
}
}
\\
\subfloat[Structure of the merge algorithm\label{fig:recmerge}]{
\scalebox{0.6}{
\begin{tikzpicture}[x=2cm,y=2cm]
\node[circle, draw] (sorted) at (1,-3) {~};
\node[circle, draw] (nonempty) at (1,-4) {~};
\node[circle, draw] (combinel) at (0,-5) {~};
\node[circle, draw] (combiner) at (2,-5) {~};
\node[circle, draw, thick, double] (end) at (-2,-2) {~};

\draw[->,thick] (1,-2.7) -- (sorted);

\draw[->] (sorted) -- (nonempty)
node [midway,above,sloped] {\scriptsize{$a\neq[], b\neq []$}};
\draw[->] (nonempty) -- (combinel)
node [midway,sloped,above]
{\scriptsize{$\instruction{fst}(a)\leq \instruction{fst}(b)$}};
\draw[->] (nonempty) -- (combiner)
node [midway,sloped,above]
{\scriptsize{$\instruction{fst}(a)> \instruction{fst}(b)$}};
\draw[->] (combinel)
.. controls (-0.5,-4) and (0.5,-3) ..
(sorted)
node [midway,sloped,above]
{\scriptsize{$y=y+[a];\instruction{queue}(x)$}};
\draw[->] (combiner)
.. controls (2.5,-4) and (1.5,-3) ..
(sorted)
node [midway,sloped,above]
{\scriptsize{$y=y+[b];\instruction{queue}(y)$}};
\draw[->] (sorted) -- (end)
node [near end,above,sloped]
{\scriptsize{$\instruction{return}(y+a+b)$}};
\end{tikzpicture}
}
}
\end{tabular}
\caption{The mergesort algorithm.}
\label{fig:mergesort-decomposition}
\end{figure}

Now if the $\instruction{sort}$ label is interpreted by a program that sorts a list (whichever method is used), a program implementing this algorithm is itself a program that sorts a list. The recursive definition consists in defining the overall algorithm $R$ obtained by glueing $A$ along the label $\instruction{sort}$ within $A$. This is computing a limit: one starts with $A$, and then obtains $A[\instruction{sort}\leftarrow A]$, in which one then glues $A$ along the label $\instruction{sort}$ to obtain $A[\instruction{sort}\leftarrow A][\instruction{sort}\leftarrow A]$, etc.

The recursive program $R$ is then defined formally as the limit of this process, which can be represented as an infinite graph that no longer contains the label $\instruction{sort}$. If one tries to express in more details the recursive structure, one obtains that the obtained algorithm is defined as an implementation of the algorithm shown in \autoref{fig:recmergesort} which put forth the recursive structure, which is close to the recursive definition considered by Moschovakis \cite{MoschovakisAlgo}:
\[
\mathrm{sort}(x)=\left\{
\begin{array}{ll}
[] &\text{if $x=[]$}\\
\mathrm{merge}(\mathrm{sort}(\mathrm{split}(x)_1),\mathrm{sort}(\mathrm{split}(x)_2)) & \text{otherwise}.
\end{array}\right.
\]
Here the $\mathrm{merge}$ algorithm, shown in \autoref{fig:recmerge}, can also be given a recursive definition:
\[
\mathrm{merge}(a,b)=\left\{
\begin{array}{ll}
b &\text{if $a=[]$}\\
a &\text{if $b=[]$}\\
\mathrm{fst}(a)+\mathrm{merge}(\mathrm{queue}(a),b) & \text{if $\mathrm{fst}(a)\leq \mathrm{fst}(b)$}\\
\mathrm{fst}(b)+\mathrm{merge}(a,\mathrm{queue}(b)) & \text{otherwise}\\
\end{array}\right.
\]

Finally, let us note that in the proposed algorithm, no order is enforced on the two subroutines $\instruction{sort}(a)$ and $\instruction{sort}(b)$: the label simply imposes that both operation are performed in this part of the algorithm. As a consequence, an implementation may choose to sort $a$ before $b$, or $b$ before $a$, or even sort both in parallel. Variants of this algorithm in which an order is imposed can be considered by replacing the arrow labelled with both sort operations by two consecutive arrows, one sorting $a$ (for instance), the other sorting $b$. Another variant in which both operations are performed in parallel may be represented by a single arrow labelled with $\instruction{sort}(a)\mid \instruction{sort}(b)$ where $\mid$ indicates explicitly the parallel execution. These algorithms are refinements of the original one, and any implementation of those is an implementation of the algorithm in \autoref{fig:mergesort}. However they are more specific, and any implementation of the original algorithm may not implement one of those, depending on how the subroutine are ordered.

\section{Algorithms as succinct descriptions}

The previous sections introduced a formal definition of algorithms complementing the definition of programs recalled in Section \ref{sec:models}. At a structural level, the two notions are deliberately very close: both are described by finite control graphs labelled by computational operations. This similarity reflects the fact that algorithms and programs both describe computational processes, albeit at different levels of abstraction.

However, this proximity also raises a conceptual issue. If algorithms and programs are represented by essentially similar mathematical objects, what distinguishes an algorithm from a program? I argue that the distinction cannot rest solely on the formal structure itself, but must also involve the role played by these objects. In particular, an algorithm should provide a succinct and intelligible description of the programs implementing it.

This question becomes especially relevant in the context of neural networks. Indeed, the complete sequence of matrices defining a neural network can itself be described abstractly using an adequate abstract data structure. One might therefore argue that a neural network already constitutes an algorithm, with its concrete execution corresponding merely to an implementation in which each operation is realised directly as a primitive computational step. While this would be coherent with the third usage of the term \emph{algorithm} mentioned in the introduction, it feels incompatible with the intuitive notion of algorithm developed in mathematics and computer science.

I will therefore argue instead that such a fully explicit description should not, in general, be regarded as algorithmically informative. Although it formally specifies a computational process, it somehow fails to do so at a sufficiently abstract level. Algorithms are not merely abstract specifications: they are also intended to provide intelligible and communicable descriptions of computational processes.

This suggests refining the relationship between algorithms and programs by introducing a notion of succinctness. Rather than modifying the definition of algorithm itself, the idea is to distinguish programs that admit genuinely compressed algorithmic descriptions from those that do not.

\begin{definition}
Fix a function
\(
f:\mathbb{N}\to\mathbb{N}
\)
such that
\(
f(n)=o(n).
\)
Given a program $P$, we say that $P$ \emph{admits an $f$-succinct algorithm} if there exists an algorithm $A$ implemented by $P$ satisfying
\(
\mathrm{size}(A)\leq f(\mathrm{size}(P)).
\)
\end{definition}

The precise choice of the function $f$ determines the degree of compression required from the algorithmic description relative to the implementation. I will not attempt here to determine which asymptotic regimes best reflect practical notions of succinctness. Such questions would require a more detailed study of concrete implementations, for example comparing algorithms with compiled executable code rather than source-level descriptions. The arguments developed below, however, only require the assumption that
\(
f(n)=o(n).
\)

This condition admits a natural interpretation in terms of Kolmogorov complexity. Indeed, if a program $P$ admits an $f$-succinct algorithm $A$, together with a fixed implementation procedure transforming $A$ into $P$, then $A$ is a witness that $P$ has low Kolmogorov complexity.

In the context of neural networks, admitting a succinct algorithm may be interpreted as a strong intensional form of explainability. Indeed, the existence of a compressed algorithmic description means that the network is not merely a large collection of parameters, but possesses an underlying computational structure describable at a significantly higher level of abstraction.

The following consequence is immediate from standard counting arguments in Kolmogorov complexity theory.

\begin{proposition}
The proportion of programs of size $n$ admitting $f$-succinct algorithms tends to zero as $n\to\infty$.
\end{proposition}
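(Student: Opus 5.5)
The plan is a counting argument comparing the number of programs of size $n$ with the number of possible succinct descriptions. I first fix the background conventions the statement leaves implicit. Programs are taken over a fixed model of computation $\alpha:\monoid{I}\acton\Space{X}$ with a finite instruction set $I$. $\mathrm{size}$ counts edges plus states. The quantity to control is the proportion among programs of size exactly $n$, counted up to isomorphism of labelled graphs.

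The key step is an \emph{injectivity-up-to-small-fibres} argument. It rests on the observation that a program $P$ implementing $A$ is, by definition, the glueing $\mathcal{G}(A,\phi)$ for some $\phi:L\to\programs{\alpha}$. Hence $P$ is determined by the pair $(A,\phi)$. Moreover, only the restriction of $\phi$ to labels actually used by $A$ matters, and there are at most $|E_A|\leq f(n)$ of these.

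The sizes of the glued pieces $\phi(\ell(e))$ sum to at most $n$ plus the number of identified vertices. So I would encode $P$ by:
\begin{enumerate}[noitemsep,nolistsep]
\item a description of $A$, of length $O(f(n)\log f(n))$;
\item the list of programs $\phi(o)$, one for each used label $o$.
\end{enumerate}

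This encoding alone is not shorter than a direct encoding of $P$. I therefore expect the real content to come from a different reading of the definition. The statement is presented as following from Kolmogorov complexity, with an \emph{implementation procedure} treated as fixed. So I would follow the paper's remark and argue as follows. If $A$ is $f$-succinct for $P$ and the passage $A\mapsto P$ is given by a fixed procedure, then $K(P)\leq c\cdot f(n)\log f(n)+O(1)$.

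From there the count is standard. The number of strings, hence of programs, of Kolmogorov complexity at most $g(n)=c f(n)\log f(n)+O(1)$ is less than $2^{g(n)+1}$. The number of programs of size $n$ is at least $2^{\Omega(n)}$ as soon as $|I|\geq 2$, using, say, simple paths of length $n$ with arbitrary labels. Since $f(n)=o(n)$, the ratio $2^{g(n)+1}/2^{\Omega(n)}\to 0$, which gives the result.

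The main obstacle is that step. The definition of implementation quantifies over arbitrary $\phi$, and the trivial one-vertex-per-edge algorithm already compresses nothing. More seriously, an algorithm with a single edge whose label is glued to $P$ itself has size $O(1)$ and is implemented by every program. So the statement is false unless $\phi$ is constrained. I would therefore make the assumption explicit: the implementation procedure mapping algorithms to programs is fixed, or $\phi$ is drawn from a fixed finite library of primitive implementations, for instance one program per structural map of a fixed data structure. Under that restriction $P$ is computable from $A$ alone, and the counting above goes through.
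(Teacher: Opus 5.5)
Your strategy is the one the paper intends. The paper gives no proof beyond two things: the remark that, under a fixed implementation procedure $A\mapsto P$, an $f$-succinct algorithm witnesses low Kolmogorov complexity of $P$, and an appeal to standard counting. Your observation that the statement is false for unrestricted $\phi$ is correct: a one-edge algorithm whose label is sent to $P$ glues back to $P$ itself. Fixing a finite library of implementations is exactly the hypothesis the paper leaves implicit.

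The gap is in your last step. You bound the number of programs with an $f$-succinct algorithm by $2^{g(n)+1}$, where $g(n)=c\,f(n)\log f(n)+O(1)$. You then compare this with the lower bound $2^{\Omega(n)}$ from labelled paths. That comparison needs $f(n)\log f(n)=o(n)$, which does not follow from $f(n)=o(n)$. For $f(n)=n/\log\log n$ one has $f(n)\log f(n)\sim n\log n/\log\log n$, which dominates $n$. Already for $f(n)=n/\log n$, $g(n)\sim c\,n$, which exceeds the exponent $n/2$ of your path count once $c>1/2$. So, as written, the ratio need not tend to zero. You charge algorithms their full bit-length $\Theta(f\log f)$, but your count of programs is far from tight.

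The fix is to use the true order $2^{\Theta(n\log n)}$ of the number of programs of size $n$ under your convention. Take $k$ states joined by a spine $0\to 1\to\dots\to k-1$ whose edges all carry one instruction, with $0$ initial and $k-1$ terminal. These are the only edges with that label and they form a path from the initial state, so every label-preserving isomorphism fixing the initial state fixes every state. Now add $m$ distinct further edges carrying a second instruction, with sources among $0,\dots,k-2$ and arbitrary targets, where $k+(k-1)+m=n$ and $k,m=\Theta(n)$. This gives $\binom{k(k-1)}{m}\geq 2^{\Omega(n\log n)}$ pairwise non-isomorphic programs of size $n$. Since eventually $f(n)\leq n$, you get $f(n)\log f(n)\leq f(n)\log n=o(n\log n)$, and the ratio does go to zero.

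Once the procedure $A\mapsto\mathcal{G}(A,\phi_0)$ is fixed, you do not need Kolmogorov complexity at all. The number of programs of size $n$ admitting an $f$-succinct algorithm is at most the number of algorithms of size at most $f(n)$ over the finite library, which is $2^{O(f(n)\log f(n))}$. Alternatively, take $\mathrm{size}$ to be the bit-length of a reasonable fixed encoding, as the paper's appeal to Kolmogorov complexity suggests. Then $K(P)\leq f(n)+O(1)$, and comparing with $2^{\Omega(n)}$ programs does suffice.
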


In this sense, most sufficiently large programs do not possess meaningful algorithmic descriptions. Equivalently, most programs are not compressible into significantly smaller abstract computational structures. From the present perspective, they therefore fail to exhibit the kind of higher-level organisation usually associated with algorithms. In layman's term, this theorem states that most programs do not admit $f$-succinct algorithms.

This observation is particularly relevant in the context of large neural networks. While such systems may implement highly structured behaviours, the proposition suggests that generic parameter configurations should not be expected to admit succinct algorithmic descriptions. Possessing an algorithm, in the sense discussed above, therefore corresponds to a strong intensional form of explainability: it means that the computational behaviour of the network can be described through a substantially more compact abstract structure.

\section{Conclusion}

The distinction introduced in this work between algorithms, programs, and implementations opens several directions for future research. Since implementations induce a preorder relating abstract algorithms to concrete programs, one may investigate quantitative notions comparing different implementations or measuring the extent to which a program admits an underlying algorithmic structure. This raises the possibility of defining notions analogous to distances, approximations, or convergence between programs and algorithms, potentially allowing one to formalise ideas such as a program ``converging'' toward an algorithmic description, which could shed some light on learning techniques.

In particular, from the present perspective, an important question is whether such systems possess recoverable intensional structures that may legitimately be regarded as succinct algorithms. If such structure exists, then it should be describable from the training data, independently of the particular underlying model of computation. In a recent collaboration with Jarvis, Gastaldi, and Terilla \cite{seiller-GeoNucleus,seiller-LogicNucleus}, we explore geometric and logical structures arising from data. Reframed in the context of the current discussion, this corresponds to the extraction of an abstract data structure from the training corpus directly, and points to the possibility of mathematically recovering some succinct algorithmic description of the programs obtained through training.

The formal definition of algorithms and implementation could also provide the basis for the development of formal verification and certification techniques at the algorithmic level itself. Rather than verifying individual implementations independently, one could aim to establish properties and correctness results directly for algorithms and then transfer these results systematically to concrete programs through the implementation relation. This would provide a mathematically robust framework for implementation-independent verification and certification, allowing proofs carried out at the abstract algorithmic level to apply uniformly across heterogeneous computational models and implementations.

Finally, the framework also suggests a new approach to computational complexity. Classical complexity theory deliberately abstracts away from concrete implementation details and machine architectures. By contrast, the present approach explicitly relates algorithms, representations, and implementations, making it possible to define complexity measures sensitive to the actual structure of computations. This raises the prospect of developing exact and architecture-aware complexity theories incorporating features such as memory hierarchies, cache sizes, branch prediction, or parallel execution mechanisms, while remaining grounded in abstract algorithmic descriptions.

\bibliographystyle{splncs04}
\bibliography{thomas}

\begin{thebibliography}{10}
\providecommand{\url}[1]{\texttt{#1}}
\providecommand{\urlprefix}{URL }
\providecommand{\doi}[1]{https://doi.org/#1}

\bibitem{Airoldi}
Airoldi, M.: Machine {Habitus}: {Toward} a {Sociology} of {Algorithms}. Polity
  Press (2022), \url{https://books.google.fr/books?id=sIVdzgEACAAJ}

\bibitem{ValarcherLe}
Anh-Ton~Le, H.L., Valarcher, P.: {Completeness of Seiller's Abstract Machine}
  (2025), \url{https://hal.u-pec.fr/hal-05137612}, preprint

\bibitem{bellantonicook}
Bellantoni, S., Cook, S.: A new recursion-theoretic characterization of the
  polytime functions. Computational Complexity  \textbf{2} (1992),
  \url{https://doi.org/10.1007/BF01201998}

\bibitem{BlassGurevich}
Blass, A., Gurevich, Y.: Algorithms: A quest for absolute definitions. Bulletin
  of the European Association for Theoretical Computer Science  (2003)

\bibitem{WhenAlgoSame}
Blass, A., Dershowitz, N., Gurevich, Y.: When are two algorithms the same? CoRR
   \textbf{abs/0811.0811} (2008), \url{http://arxiv.org/abs/0811.0811}

\bibitem{ChurchAlgo}
Church, A.: An unsolvable problem of elementary number theory. American Journal
  of Mathematics  \textbf{58}(2),  345--363 (1936),
  \url{http://www.jstor.org/stable/2371045}

\bibitem{Dean2016Algorithms}
Dean, W.: Algorithms and the mathematical foundations of computer science. In:
  Horsten, L., Welch, P. (eds.) Gödel's Disjunction: The scope and limits of
  mathematical knowledge. Oxford University Press (08 2016).
  \doi{10.1093/acprof:oso/9780198759591.003.0002}

\bibitem{seiller-LogicNucleus}
Gastaldi, J.L., Jarvis, S., Seiller, T., Terilla, J.: Linear realisability
  structures in enriched adjunctions (2026), submitted

\bibitem{seiller-GeoNucleus}
Gastaldi, J.L., Jarvis, S., Seiller, T., Terilla, J.: Projective metric
  geometry of tropical nuclei: gap matrices, event loci, and order chambers
  (2026), submitted

\bibitem{ASMgurevich}
Gurevich, Y.: Sequential abstract state machines capture sequential algorithms.
  ACM Transactions on Computational Logic  \textbf{1},  77--111 (2000)

\bibitem{GurevichAlgo}
Gurevich, Y.: What Is an Algorithm?, pp. 31--42. Springer Berlin Heidelberg
  (2012). \doi{10.1007/978-3-642-27660-6_3}

\bibitem{EuclidVol1}
Heath, T.: The {Thirteen} {Books} of {Euclid}'s {Elements}. No. vol. 1,
  Cambridge University Press (1956)

\bibitem{EuclidVol2}
Heath, T.: The {Thirteen} {Books} of {Euclid}'s {Elements}. No. vol. 2,
  Cambridge University Press (1956)

\bibitem{EuclidVol3}
Heath, T.: The {Thirteen} {Books} of {Euclid}'s {Elements}. No. vol. 3,
  Cambridge University Press (1956)

\bibitem{seiller-Weyl}
Joinet, J.B., Seiller, T.: From abstraction and indiscernibility to
  classification and types: revisiting hermann weyl’s theory of ideal
  elements. Kagaku tetsugaku  \textbf{53}(2),  65--93 (2021).
  \doi{10.4216/jpssj.53.2_65}

\bibitem{KleeneRecursiveFunctionals}
Kleene, S.C.: Recursive functionals and quantifiers of finite types i.
  Transactions of the American Mathematical Society  \textbf{91}(1),  1--52
  (1959), \url{http://www.jstor.org/stable/1993145}

\bibitem{KolmogorovUspensky}
Kolmogorov, A.N., Uspenskii, V.A.: On the definition of an algorithm. Uspekhi
  Mat. Nauk  \textbf{13},  3--28 (1958)

\bibitem{LamasseAlgoriste}
Lamassé, S.: Relationships between french “practical arithmetics” and
  teaching? In: Scientific Sources and Teaching Contexts Throughout History:
  Problems and Perspectives, pp. 125--153. Springer (2013)

\bibitem{MarkovAlgo}
Markov, A.A.: The theory of algorithms, Trudy Mat. Inst. Steklov., vol.~42.
  Acad. Sci. USSR (1954)

\bibitem{MoschovakisAlgo}
Moschovakis, Y.: What is an algorithm? In: Mathematics Unlimited --- 2001 and
  beyond (2001)

\bibitem{MoschovakisFoundations}
Moschovakis, Y.N.: On founding the theory of algorithms. In: Dales, H.G.,
  Oliveri, G. (eds.) Truth in Mathematics, pp. 71--104. Oxford University
  Press, Usa (1998)

\bibitem{GoA}
Naibo, A., Petrolo, M., Seiller, T.: Goa: The geometry of algorithms. The
  Reasoner  \textbf{17}(4) (Jul 2023),
  \url{https://riviste.unimi.it/index.php/thereasoner/article/view/24136}

\bibitem{seiller-HdR}
Seiller, T.: Mathematical informatics (2024),
  \url{https://theses.hal.science/tel-04616661}, habilitation thesis

\bibitem{seiller-MI1}
Seiller, T.: Mathematical informatics: Models of computation (2026),
  \url{https://hal.archives-ouvertes.fr/hal-05587108}, submitted

\bibitem{Yanofsky}
Yanofsky, N.S.: Towards a definition of an algorithm. Journal of Logic and
  Computation  \textbf{21}(2),  253--286 (2011). \doi{10.1093/logcom/exq016}

\end{thebibliography}

\end{document}